\newtheorem{theorem}{Theorem}
\newtheorem{lemma}[theorem]{Lemma}
\newtheorem{definition}[theorem]{Definition}
\newtheorem{corollary}[theorem]{Corollary}
\newtheorem{remark}[theorem]{Remark}
\newtheorem{example}[theorem]{Example}
\newtheorem{procedure}{Procedure}
\newtheorem{algorithm}{Algorithm}
\newcommand{\N}{{\mathbb N}}
\newcommand{\Z}{{\mathbb Z}}
\newcommand{\F}{\mathbb F}
\newcommand{\fF}{\mathfrak F}
\newcommand{\Le}{\mathbb L}
\newcommand{\A}{{\mathcal A}}
\newcommand{\B}{{\mathcal B}}
\newcommand{\D}{{\mathcal D}}
\newcommand{\R}{{\mathcal R}}
\newcommand{\bL}{\mathbf {\Lambda}}
\newcommand{\Row}{\mathrm{Row}}
\newcommand{\tq}{\, \mid \,}
\newcommand{\supp}{{\rm supp}}   
\begin{document}
\title[Decoding Hyperbolic-like Codes by BMS]{Decoding up to $4$ errors in Hyperbolic-like Abelian Codes by the Sakata Algorithm}

\author[J.J. Bernal]{Jos\'{e} Joaqu\'{\i}n Bernal}

\email{josejoaquin.bernal@alu.um.es jsimon@um.es}

\author[J.J. Sim\'{o}n]{Juan Jacobo Sim\'{o}n}
\address{Departamento de Matem\'{a}ticas, Universidad de Murcia, Espa\~{n}a}
\thanks{This work was partially supported by MINECO, project MTM2016-77445-P, and Fundaci\'{o}n S\'{e}neca of Murcia, project 19880/GERM/15.}

\maketitle              

\begin{abstract}
We deal with two problems related with the use of the Sakata's algorithm in a specific class of bivariate codes (see \cite{Blah,Sakata 2,Sakata}). The first one is to improve the general framework of locator decoding in order to apply it on such abelian codes. The second one is to find  a set of indexes oF the syndrome table such that no other syndrome contributes to implement the BMSa and, moreover, any of them may be ignored \textit{a priori}. In addition, the implementation on those indexes is sufficient to get the Groebner basis; that is, it is also a termination criterion.

\keywords{Abelian Codes  \and Decoding \and Berlakamp-Massey-Sakata Algorithm.}
\end{abstract}
\section{Introduction}
The Sakata algorithm (or Berlekamp-Massey-Sakata algorithm, BMSa, for short) is one of the best known procedures to find Groebner basis (see \cite{Cox,Sakata 6}) for the so called ideal of linear recurrence relations on a doubly periodic array  \cite{rubio,Sakata 2}. It is a common method for decoding algebraic geome\-tric  codes, specially those constructed from one-point algebraic curves \cite{Blah,Cox et al Using,Sakata 3} and, within them, the family of Hyperbolic Cascade Reed-Solomon Codes (see \cite{saints heegard}). Less studied or understood is the original application of the BMSa: decoding Abelian Codes through locator decoding \cite{Sakata}.

The general idea of decoding with the BMSa is as follows: a codeword of an abelian code of lenght $r=r_1\cdot r_2$ (see notation below), say $c=c(X_1,X_2)$ was sent and we receive $f=c+e$; where $e=e(X_1,X_2)$ is called the error polynomial, as usual. To find $e$, we consider what we will call the syndorme values $e(\alpha_1^i,\alpha_2^j)=s_{i,j}$, where $\alpha_i$ is a $r_i$-th primitive root of unity, for $i=1,2$. It is clear that we only have to know the syndrome values for $0\leq i\leq r_1-1$ and $0\leq j\leq r_2-1$; that is,  $S=(s_{ij})_{\N\times\N}$ is a doubly $r_1\times r_2$-periodic array (see Definition~\ref{Doubly period array}(1)). In fact, as $e(X_1,X_2)$ is unknown, \textit{a priori} we will not be able to know the entire table $S$, meanwhile, as we will see, some unknown syndromes values may be discovered by using the specific properties of the given abelian code. Locator decoding shows us a close relation between the error positions and the ideal of linear recurring relations of $S$, $\bL(S)$, in such a way that if we find a Groebner basis for it we may know such positions. 

The BMSa is an iterative procedure with respect to a given well-ordering on $\N\times\N$ to find the mentioned Groebner basis. On each step, there is a given set of polynomials, called minimal set of polynomials, say $F_l$, which is updated (possibly no strict) to a new minimal set of polynomials, $F_{l+1}$; where $l+1$ means the successor with respect to the well ordering.  The process is implemented until we may apply some termination criterion in certain step (see \cite{Cox et al Using,Hackl,rubio,Sakata 2}). This criterion is (always) based on the shape of all possible footprints (see \cite[p. 1615]{Blah}) of $\bL(S)$ that we might obtain in such step.

During the implementation of the BMSa, it may happen that some  iteration does not update the given minimal set of polynomials; that is, the equality between $F_l$ and $F_{l+1}$ holds. In this context, the main goal of our paper is to prove that, up to 4 errors, there is a set of indexes, say $\B$, such that if $l\not\in \B$ then $F_l=F_{l+1}$ (that is, no other syndrome contributes to construct the Groebner basis) and, moreover, none of them may be ignored \textit{a priori}; that is, $\B$ is minimal with that property. In addition, our condition is a termination criterion because it is known that the Groebner basis may be always be obtained if one cover enough steps. 

On the other hand, a characteristic of the BMSa is that one should begin at the place $(0,0)$. This is a natural condition when codes are defined over splitting fields, as for example, Hyperbolic Cascade Reed-Solomon codes; however, in general, this may not happen, specially in the case of binary abelian codes, as in Example~\ref{ejemplo de (0,t+j)},below. This is another goal of our paper: we give an improvement of the framework for applying the locator decoding algorithm in such a way that we may consider a translated table that allows us to start in a initial point different from (0,0). This paper is a portion of a study in progress of the BMSa in a more general framework.

\section{Bivariate codes}

Let $\F$  be a finite field with $q$ elements, with $q$ a power of a prime number, let $r_i$ be positive integers, for $i\in \{ 1,2\}$, and $r=r_1\cdot r_2$.  We denote by $\Z_{r_i}$ the ring of integers modulo $r_i$. We always write its elements as canonical representatives. When necessary, we write $\overline{a}\in\Z_k$ for any $a\in\Z$ and $k\in \N$.
 
A \textbf{bivariate code}, or 2-dimensional abelian code, of length $r$ (see \cite{Imai}) is an ideal in the algebra   $\F(r_1, r_2)=\F[X_1,X_2]/\langle X_1^{r_1}-1, X_2^{r_2}-1\rangle $. Throughout this work,  we assume that this algebra is semisimple; that is, $\gcd (r_i,q)=1$, for  $i\in \{ 1,2\}$.  The codewords are identified with polynomials. The weight of a codeword $c$ is denoted by $\omega(c)$. We denote by $I$ the set $\Z_{r_1}\times \Z_{r_2}$ and we  write the elements $f \in  \F(r_1,r_2)$ as $f=\sum a_m \mathbf{X}^m$, where $m=(m_1,m_2)\in I$ and $\mathbf{X}^m=X_1^{m_1}\cdot X_2^{m_2}$. Given a polynomial $f \in \F[X_1,X_2]$, we denote by $\overline{f}$ its image under the canonical projection onto $\F(r_1,r_2)$, when necessary.

For each $i\in \{ 1,2\}$, we denote by $R_{r_i}$ (resp. $\R_{r_i}$) the set of  $r_i$-th roots of unity (resp.  $r_i$-th primitive roots of unity) and define $R=R_{r_1}\times R_{r_2}$ ($\R=\R_{r_1}\times \R_{r_2}$). Throughout this paper, we fix $\Le|\F$ as a extension field containing $R_{r_i}$.

For $f=f(X_1,X_2) \in \F[X_1,X_2]$ and $\boldsymbol{\alpha}=(\alpha_1,\alpha_2)\in R$, we write $f(\boldsymbol{\alpha})=f(\alpha_1,\alpha_2)$. For $m=(m_1,m_2)\in I$, we write $\boldsymbol{\alpha}^{m} = (\alpha_1^{m_1},\alpha_2^{m_2})$.

It is a known fact that, in the semi simple case, every abelian code $C$ in $\F(r_1,r_2)$ is totally determined by its \textbf{root set} or \textbf{set of zeros}, namely
$$Z(C)=\left\{\boldsymbol{\alpha}\in  R \tq f(\boldsymbol{\alpha})=0,\;\; \mbox{ for all }\; f\in C \right\}.$$  
For a fixed $\boldsymbol{\alpha}\in \R$, the code $C$ is  determined by its \textbf{defining set}, with respect to $\boldsymbol{\alpha}$, which is defined as 
$$\D_{\boldsymbol{\alpha}}(C) = \left\{ m\in I \tq \boldsymbol{\alpha}^{m}\in Z(C)\right\}.$$ 
It is easy to see that the notions of set of zeros and defining set may be considered for any set of either polynomials or ideals in $\F(r_1,r_2)$ (or $\Le(r_1,r_2)$); moreover, it is known that for any $G\subset \F(r_1,r_2)$ (or $\Le(r_1,r_2)$) and $\boldsymbol \alpha \in \R$ we have $D_{\boldsymbol{\alpha}}(G)=D_{\boldsymbol{\alpha}}(\langle G\rangle)$. In \cite{Blah,Cox}, the defining set is  considered for ideals $P$ in $\Le[\mathbf{X}]$. From the definition, we have $D_{\boldsymbol{\alpha}}(P)=D_{\boldsymbol{\alpha}}(\overline{P})$, where $\overline{P}$ is the canonical projection of $P$ onto $\Le(r_1,r_2)$.
 
We also recall the extension of the concept of $q$-cyclotomic coset of an integer to two components. 

Given an element $(a_1,a_2)\in I$, we define its \textit{$q$-orbit} modulo  $\left(r_1,r_2\right)$ as
	\begin{equation}\label{qorbita}
	  Q(a_1,a_2)=\left\{\left(a_1\cdot q^i ,a_2\cdot q^i  \right)\tq i\in \N\right\} \subseteq I= \Z_{r_1}\times\Z_{r_2}.
	\end{equation}
	
It is easy to see that for every abelian code $C\subseteq\F(r_1,r_2)$, $\D_{\boldsymbol{\alpha}}\left(C\right)$ is closed under multiplication by $q$ in $I$, and then $\D_{\boldsymbol{\alpha}}(C)$ is necessarily a disjoint union of $q$-orbits modulo $(r_1,r_2)$. Conversely, every union of $q$-orbits modulo $(r_1,r_2)$ defines an abelian code in $\F(r_1,r_2)$. For the sake of simplicity we only write $q$-orbit, and the tuple of integers will be clear from the context.

\section{Apparent distance and multilevel bound}

In \cite[p. 1614]{Blah}, Blahut introduce the Hyperbolic 
codes of designed distance $\delta$; a purely algebraic version of one-point AG-codes mentioned in Introduction. They are abelian codes whose defining set with respect to $\boldsymbol\alpha \in \R$ is 
 $$\A=\D_{\boldsymbol{\alpha}}(C)=\left\{(i,j)\in I\tq (i+1)(j+1)\leq \delta\right\}.$$
 In practice, the computation of the syndrome values is done over the defining set. On the other hand, for this family of codes it is known that $\delta$ (called the multilevel bound) is a lower bound for the minimum distance of $C$, denoted by $d(C)$ .
 
 Here, we point out that we use sets of the form $\A$ for both AG-codes and Hyperbolic codes in two directions. One of them is to establish their multilevel bounds and the second one is to set it as a base of syndromes from which one may try to infer those extra syndrome values needed to implement steps in the BMSa; however, as we mentioned in the Introduction, the BMSa is originally implemented step by step and all termination criteria are based on certain bounds of those expected footprints of $\bL(U)$. The reader may check that, according to all literature on this topic, there are not any result similar to our Theorem~\ref{condicion suficiente}. In fact, one may see that in \cite[Table II]{saints heegard} (Blahut's paper has not examples of hyperbolic codes) some syndromes \textit{out of} $\A$ are considered.
 
 Usually, abelian codes and AG-codes considered for implement locator decoding are defined over splitting fields, where $q$-orbits are sets with one element. In this case, one may naturally assume that $(0,0)\in \A= \D_{\boldsymbol{\alpha}}(C)$; however, this may fail in case of, for example, binary abelian codes. To overcome this obstacle we introduce our family of ``Hyperbolic-like codes'' which represent a more general situation. There is a pair $\tau\in I$, such that $\tau+\A \varsubsetneq\D_{\boldsymbol{\alpha}}(C)$ (see Example~\ref{ejemplo de (0,t+j)}, where $\tau=(0,13)$).  In detail, we take $\delta\in \N$ and, first, define
 \begin{equation}\label{conjunto B delta}
   \B_\delta= \left\{(i,j)\in I\tq (i+1)(j+1)\leq \delta\right\}\setminus\{(\delta-1,0),(0,\delta-1)\}.
 \end{equation}

Now, we say that the bivariate code $C$ is an \textbf{Hyperbolic like code of designed distance $\delta$}, if there is $\tau\in I$ such that $\tau+\B_\delta=\{\tau+a\mid a\in \B_\delta\}\subset \D_{\boldsymbol{\alpha}}(C)$. We shall see that, in this case, its minimum distance verifies $d(C)\geq \delta$. To do this, we shall use a lower bound to the minimum distance of $C$, called the strong apparent distance, that we denote by $sd^*(C)$. Its definition and description is too large to be reproduced here (see \cite{BBCS2,Camion}); so we only give a brief comment. First, for each codeword $f\in C$ consider the matrix of coefficients of the discrete Fourier transform of $f$; say $M$. The computation of $sd^*(M)$ comes from considering the complementary set of the support of the matrix and it is proved that the weight of any codeword $f$ verifies $w(f)\geq sd^*(M)$. Finally,  $sd^*(C)$  is defined as the minimum of all apparent distances. Then, we get $d(C)\geq sd^*(C)$.

\begin{lemma}
 Let $m=(m_1,m_2)\in I$ and $0\neq M=\left(a_{n}\right)_{n\in I}$ such that $a_n=0$ for all  $n\in m+\B_\delta$, with $\delta\in \N$. Then the strong apparent distance of $M$, denoted by $sd^*(M)$ \cite[Definition 10]{BBCS2}, satisifies $sd^*(M)\geq \delta$. 
\end{lemma}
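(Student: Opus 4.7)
The plan is to reduce the statement to the classical hyperbolic bound by using the invariance of the strong apparent distance under cyclic translations, and then to check that removing the two extreme corners of the hyperbolic set does not weaken the bound.

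First, recall that $sd^*$, as defined in \cite[Definition~10]{BBCS2}, depends only on the relative positions of the nonzero entries of the matrix $M$, so it is invariant under cyclic shifts of the index set $I=\Z_{r_1}\times\Z_{r_2}$. After replacing $M$ with its translate by $-m$, I may therefore assume without loss of generality that $m=(0,0)$ and $a_n=0$ for every $n\in\B_\delta$.

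Next, I would unpack $sd^*(M)$ in the spirit of \cite{BBCS2,Camion}: for each row (and each column) of $M$, one computes a univariate BCH-type apparent distance from the runs of consecutive zeros, and these are then combined through a minimax/multiplicative rule. For the pattern $\B_\delta$, row $0$ retains the zeros at $(0,0),(0,1),\dots,(0,\delta-2)$, a run of $\delta-1$ consecutive zeros that yields row apparent distance at least $\delta$; for each $1\leq i\leq \lfloor\delta/2\rfloor-1$, row $i$ inherits the full run of $\lfloor\delta/(i+1)\rfloor$ consecutive zeros from the hyperbolic pattern, producing the expected BCH contribution; entirely symmetric statements hold for columns. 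Feeding these row and column data into the combination rule of $sd^*$ reproduces the counting behind the classical hyperbolic bound (as in \cite{Blah,saints heegard}) and yields $sd^*(M)\geq\delta$.

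The main obstacle is verifying that removing the two corners $(\delta-1,0)$ and $(0,\delta-1)$ from the hyperbolic set does not damage the bound. These two points only affect row $0$ and column $0$, each sitting at the extreme end of an otherwise longer run of zeros. Since BCH requires only $\delta-1$ consecutive zeros to force weight at least $\delta$, the remaining runs on the two axes are still sufficient. The delicate point is ruling out any adverse interaction in the minimax combination defining $sd^*$; I would handle this by a case analysis tracking the precise recursion in \cite[Definition~10]{BBCS2}, or, if more convenient, by contradiction: any witness to $sd^*(M)<\delta$ would force a support pattern for $M$ incompatible with the $\delta-1$ consecutive zeros already established on both coordinate axes.
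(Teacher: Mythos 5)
Your overall strategy (translate so that $m=(0,0)$, read off runs of consecutive zeros in each row, and combine row-wise BCH bounds through the multiplicative rule defining $sd^*$) is the same one the paper uses, and the observation that deleting the corners $(\delta-1,0)$ and $(0,\delta-1)$ is harmless because row $0$ still carries the $\delta-1$ consecutive zeros $(0,0),\dots,(0,\delta-2)$ is correct and is exactly what the paper relies on. However, you have misidentified where the real work lies, and the step you defer is the entire content of the proof. The delicate case is not the corner removal or a ``minimax interaction'': it is the possibility that the distinguished row $\Row(m_1)$ of $M$ is \emph{identically zero} (and likewise several consecutive rows after it). A zero row contributes nothing to the $\epsilon$-part of $sd^*$; it only increments the count $\omega_M(X_1)$ of vanishing rows. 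The paper handles this by the case analysis you skip: if $\Row(m_1)=\dots=\Row(m_1+k)=0$ and $\Row(\overline{m_1+k+1})\neq 0$ with $0\leq k<\delta-2$, then the first nonzero row still contains $\lfloor\delta/(k+2)\rfloor$ consecutive zeros from the hyperbolic pattern, so $\epsilon_M(X_1)\geq\lfloor\delta/(k+2)\rfloor+1$, and combined with $\omega_M(X_1)\geq k+1$ this gives $sd^*(M)\geq\left(\lfloor\delta/(k+2)\rfloor+1\right)(k+2)>\delta$; if all of $\Row(m_1),\dots,\Row(\overline{m_1+\delta-2})$ vanish one instead gets $\omega_M(X_1)\geq\delta-1$ directly. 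Your proposal only ever extracts the bound $\delta$ from row $0$ under the tacit assumption that this row is nonzero.

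The fallback you offer does not repair this. Arguing ``by contradiction'' from a witness to $sd^*(M)<\delta$ is not meaningful as stated: $sd^*(M)$ is a combinatorial invariant computed from the support of $M$ itself, not from a codeword, so there is no ``support pattern'' to contradict; and the inequality $w(f)\geq sd^*(M)$ runs in the wrong direction for such an argument. Similarly, ``feeding the row and column data into the combination rule reproduces the counting behind the classical hyperbolic bound'' asserts the conclusion rather than proving it. To complete the proof you must carry out the explicit case analysis on how many initial rows of $M$ vanish and verify the product inequality $\left(\lfloor\delta/(k+2)\rfloor+1\right)(k+2)>\delta$ in each case; only rows (one direction) are needed, so the column data and the minimax over directions can be dropped entirely.
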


\begin{proof}
We shall follow the notation in \cite[Remark 11]{BBCS2}. Let $\Row(i)$ the $i$-th row of $M$. Then $\overline{m+\{(0,0),\dots,(0,\delta-2)\}}\subset \Row(m_1)$ modulo $r_1$, for some $m_1\in \Z_{r_1}$. If $\Row(m_1)\neq 0$ then $\epsilon_M(X_1)\geq sd^*(\Row(m_1))\geq \delta$ and hence $sd^*(M)\geq \delta$. So, suppose that $0=\Row(m_1)=\ldots=\Row(m_1+k)$ and $\Row(\overline{m_1+k+1}) \neq 0$, for $0\leq k< \delta-2$. Then $\epsilon_M(X_1)\geq sd^*\left(\Row(\overline{m_1+k+1})\right)\geq\left(\lfloor\frac{\delta}{k+2}\rfloor+1\right)$ and $\omega_M(X_1)\geq k+1$. Then $sd^*(M)\geq \left(\lfloor\frac{\delta}{k+2}\rfloor+1\right)(k+2)= \lfloor\frac{\delta}{k+2}\rfloor(k+2)+(k+2)>\delta$. If $0=\Row(m_1)=\ldots=\Row(\overline{m_1+\delta-2})$ then $\omega_M(X_1)\geq \delta-1$ and we are done.
\end{proof}

\begin{corollary}
  Let $m=(m_1,m_2)\in I$ and $\delta\in \N$. Let $0\neq M=\left(a_{n}\right)_{n\in I}$ such that $a_n=0$ for all  $n\in m+\B_\delta$. Then  $d(C)\geq \delta$.
\end{corollary}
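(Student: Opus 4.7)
The plan is to interpret this corollary as saying: if $C$ is a Hyperbolic-like code of designed distance $\delta$, then $d(C)\geq \delta$, with the hypothesis on $M$ being the one that any DFT matrix of a codeword satisfies. So the proof amounts to specialising the lemma to the DFT matrices of codewords and chaining together the bounds $w(f)\geq sd^*(M_f)$, $sd^*(C)=\min_f sd^*(M_f)$ and $d(C)\geq sd^*(C)$ recalled just before the lemma statement.

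First I would pick an arbitrary nonzero codeword $f\in C$ and form its DFT matrix $M_f=(f(\boldsymbol{\alpha}^n))_{n\in I}$. Since $C$ is a Hyperbolic-like code, by definition there exists $\tau\in I$ with $\tau+\B_\delta\subset \D_{\boldsymbol{\alpha}}(C)$; by the definition of the defining set, this forces $f(\boldsymbol{\alpha}^n)=0$ for every $n\in \tau+\B_\delta$. Thus $M_f$ is a nonzero element of $\Le^I$ whose entries vanish on $\tau+\B_\delta$, which is precisely the hypothesis of the preceding lemma with $m=\tau$.

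Next I would apply the lemma to conclude $sd^*(M_f)\geq \delta$. Combined with the inequality $w(f)\geq sd^*(M_f)$ stated just above the lemma, this gives $w(f)\geq \delta$ for every nonzero $f\in C$. Taking the minimum over all such $f$ yields $sd^*(C)\geq \delta$, and hence $d(C)\geq sd^*(C)\geq \delta$.

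No step here looks like a genuine obstacle: the corollary is essentially a repackaging of the lemma through the strong-apparent-distance machinery already cited in the preceding paragraph. The only thing one has to be careful about is making explicit that the vanishing pattern of the DFT matrix of any codeword of a Hyperbolic-like code is controlled by $\tau+\B_\delta$, so that the lemma applies uniformly to every codeword with the same translate $m=\tau$.
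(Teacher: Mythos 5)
Your proof is correct and matches the paper's (implicit) argument: the corollary is stated without proof precisely because it follows by applying the lemma to the DFT matrix of each nonzero codeword of a Hyperbolic-like code, whose entries vanish on $\tau+\B_\delta$ since $\tau+\B_\delta\subset \D_{\boldsymbol{\alpha}}(C)$, and then chaining $d(C)\geq sd^*(C)\geq \delta$. Your reading of the somewhat terse statement (identifying $m$ with $\tau$ and $M$ with the codeword transforms) is exactly the intended one.
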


\section{The Berlekamp-Massey-Sakata algorithm}

Let us recall some terminology and some facts about the BMSa. We shall introduce some minor modifications in order to improve its application. 

 We denote by $\N$ the set of natural numbers (including 0) and we define $\Sigma_0=\N\times \N$. We consider the partial ordering in $\Sigma_0$ given by 
 $(n_1,n_2) \preceq (m_1,m_2) \Longleftrightarrow n_1\leq m_1$ and $n_2\leq m_2.$  On the other hand, we will use a (total) monomial ordering \cite[Definition 2.2.1]{Cox}, denoted by ``$\leq_T$'', as in \cite[Section 2]{Sakata 2}. This ordering will be either the lexicographic order (with $X_1>X_2$) \cite[Definition 2.2.3]{Cox} or the (reverse) graded order (with $X_2>X_1$) \cite[Definition 2.2.6]{Cox}. Any result in this paper may be obtained under the alternative lexicographic or graded orders. The meaning of ``$\leq_T$'' will be specified as required.

\begin{definition}\label{Sigmas y Delta rectangulo}
 For $s,k\in\Sigma_0$, we define
\begin{enumerate}
 \item $\Sigma_{s}=\left\{m\in \Sigma_0\tq s\preceq m\right\}$,
 \item  $\Sigma_{s}^k =\left\{m\in \Sigma_0\tq s\preceq m \;\;\text{and}\;\; m <_T k\right\}$ and 
 \item $\Delta_{s}=\left\{n\in \Sigma_0\tq n \preceq s\right\}$.
\end{enumerate}
\end{definition}

 Given $m,n\in \Sigma_0$, we define $m+n$, $m-n$ (provided that $n\preceq m$)  and $n\cdot m$, coordinatewise, as it is usual. An infinite array or matrix is defined as $U=\left(u_n\right)_{n\in \Sigma_0}$; where the $u_n$ will always belong to the extension field $\Le$. In practice, we work with finite arrays defined as infinite doubly periodic ones (see \cite[p. 324]{Sakata 2}) and we consider subarrays, as follows.

\begin{definition}\label{Doubly period array}
Let $U=\left(u_n\right)_{n\in \Sigma_0}$ be an infinite array.
 \begin{enumerate}
  \item We say that $U$ is a doubly periodic array of period $r_1\times r_2$ if the following property is satisfied: for $n=(n_1,n_2)$ and $m=(m_1,m_2)$ we have that $n_i\equiv m_i\mod r_i$ for $i=1,2$ implies that $u_n=u_m$.
  
  \item If $U$ is a doubly periodic array of period $r_1\times r_2$, a finite subarray $u^l\subset U$, with $l\in \Sigma_0$ is the array $u^l=\left(u_m\tq m\in \Sigma_0^l\cap \Delta_{(r_1-1,r_2-1)}\right)$
 \end{enumerate}
\end{definition}

Note that, in the case of period $r_1\times r_2$ we may identify $I=\Z_{r_1}\times\Z_{r_2}=\Delta_{(r_1-1,r_2-1)}$; so that, $u^l=\left(u_m\tq m\in I\right)$ for $l>_T(r_1,r_2)$.

As it is well known, every monomial ordering is a well order, so that any $n\in \Sigma_0$ has a successor. For the graded order we have
$$n+1=\begin{cases}
	    (n_1-1,n_2+1) & \text{if } n_1>0\\
            (n_2+1,0) & \text{if } n_1=0
      \end{cases}.$$

In the case of the lexicographic order, we have to introduce, besides the unique successor with respect to the monomial ordering, another successor that we will only use for the recursion steps over $n\in \Delta_{(r_1-1,r_2-1)}$. We also denote it by $n+1$ as follows:
$$n+1=\begin{cases}
	    (n_1,n_2+1) & \text{if } n_2<r_2-1\\
            (n_1+1,0) & \text{if } n_2=r_2-1
      \end{cases}.$$
      
So, during the implementation of the BMSa (that is, results related with it), the successor of $n\in \Delta_{(r_1-1,r_2-1)}$ will be denoted by $n+1$, independently of the monomial ordering considered.
      
Now we recall some definitions that may be found in \cite[pp. 322-323]{Sakata 2}. For any $f\in \Le[\mathbf{X}]$ or $f\in \Le(r_1,r_2)$, we denote the leading power product exponent of $f$, with respect to ``$\leq_T$'' by $LP(f)$. Of course $LP(f)\in \Sigma_0$. For $F\subset \Le[\mathbf X]$, we denote $LP(F)=\{LP(f)\tq f\in F\}$. 

\begin{definition}
 Let $U$ be a doubly periodic array, $f\in \Le[\mathbf X]$, $n\in \Sigma_0$ and $LP(f)=s$. We write $f=\sum_{m\in supp(f)}f_m\mathbf{X}^m$ and define 
 \[f[U]_n=\begin{cases}
           \displaystyle{ \sum_{m\in\supp(f)}f_m u_{m+n-s}}& \text{if }  n\in\Sigma_s\\
           0 & \text{otherwise}
          \end{cases}.\]

The equality $f[U]_n=0$ will be called a \textbf{linear recurring relation} and in this case, we will say that the polynomial \textbf{$f$ is valid for $U$ at $n$}.
\end{definition}

\begin{definition} Let $U$ be a doubly periodic array and $f\in \Le[\mathbf X]$ with $LP(f)=s$.
 \begin{enumerate}
   \item We say that $f$ generates $U$ and write $f[U]=0$, if $f[U]_n=0$ at any $n\in \Sigma_0$.
   \item For any $u=u^k\subset U$, we say that $f$ generates $u$ if $f[U]_n=0$ at every $n\in \Sigma_s^k$ and we write $f[u]=f[u^k]=0$. In case $\Sigma_s^k=\emptyset$ we define $f[u]=0$.
    \item For any $u=u^k\subset U$, we say that $f$ generates $u$, up to $l<_T k$, if $f[u^l]=0$. 
    
 \item Let $u=u^k\subset U$.
 \begin{enumerate}
  \item  We write the set of generating polynomials for $u$ as
  \[\bL(u)=\{f\in \Le[\bf X]\tq f[u]=0\}.\]
\item  We write the set of generating polynomials for $U$ as
  $$\bL(U)=\left\{f\in\Le[\bf X] \tq f[U]=0\right\},$$
  which was originally called $VALPOL(U)$ \cite[p. 323]{Sakata 2}.
 \end{enumerate}

 \end{enumerate}
\end{definition}

 \begin{remark}\label{hechos sobre lambda de U}
 By results in \cite{Blah}, \cite{Sakata 2} and \cite{Sakata} we have the following facts:
 
\begin{enumerate}
  \item $\bL(U)$ is an ideal of $\Le[\mathbf X]$.
  \item Setting $\overline{\bL(U)}=\left\{\overline{g}\mid g\in \bL(U)\right\}$, and viewing the elements of $\Le(r_1,r_2)$ as polynomials, we have that the ideal $\overline{\bL(U)}=\Le(r_1,r_2)\cap \bL(U)$.
 \end{enumerate}
\end{remark}

Let $0<d\in\N$ and consider the sequence $s^{(1)},\dots,s^{(d)}$ in $\Sigma_0$ satisfying
\begin{equation}\label{desigualdades de los puntos de def}
 s^{(1)}_1>\ldots>s^{(d)}_1=0 \quad\text{and}\quad 0=s^{(1)}_2<\ldots<s^{(d)}_2.
\end{equation}

Now we set 
\begin{eqnarray}\label{los delta para G} \nonumber
\Delta_i&=&\left\{m\in \Sigma_0\tq m\preceq \left(s^{(i)}_1-1,s^{(i+1)}_2-1\right)\right\}_{1\leq i\leq d-1}\\
&=&\Delta_{(s^{(i)}_1-1,s^{(i+1)}_2-1)}
\end{eqnarray}
 and define  $\Delta=\bigcup_{i=1}^{d-1}\Delta_i$, which is called a \textbf{$\Delta$-set} or delta-set, and the elements $s^{(1)},\dots,s^{(d)}$ are called its \textbf{defining points}.

We denote by $\fF$ the collection of sets $F=\left\{f^{(1)},\dots,f^{(d)}\right\}\subset \F[\mathbf X]$ where $\{LP(f^{(i)})=s^{(i)}\tq i=1,\dots, d\}$ satisfy the condition~\eqref{desigualdades de los puntos de def}. We shall say that the elements $F\in\fF$ are of type $\Delta$ and we denote by $\Delta(F)$ the $\Delta$-sets determined by them.

 \begin{definition}\label{base de u}
 Let $U$ be doubly periodic and $u=u^k\subset U$.
 We say that the set $F=\left\{f^{(1)},\dots,f^{(d)}\right\}$ is a minimal set of polynomials for $u$ if:
  \begin{enumerate}
   \item $F\subset \bL(u)$.
   \item $F\in \fF$; that is $\Delta(F)$ exists.
   \item If $g\in\F[\mathbf X]$ verifies $LP(g)\in \Delta(F)$ then $g\not\in\bL(u)$ (i.e. $g[u]\neq 0$).
  \end{enumerate}
 \end{definition}
 
 We denote by $\fF(u)$ the collection of the minimal sets of $u$ and we call $\Delta(F)$ the \textbf{footprint of $\bL(u)$} (even $\bL(u)$ is not an ideal). For any minimal set of polynomials $F=\left\{f^{(1)},\dots,f^{(d)}\right\}$  one may see that, for $i\in \{1,\dots,d-1\}$, the sets $\Delta_i$ in \eqref{los delta para G} are nonempty and they are determined by certain polynomials that we call $g^{(i)}$. In fact, in each iteration, one may construct a set $G=\{g^{(i)}\tq i=1,\dots,d-1\}$ \cite[p. 327]{Sakata 2}.

 \subsection{The algorithm.}

 From \cite{Blah,Sakata 2,Sakata},  we have the following facts: 
 \begin{remark}\label{los delta conjuntos y los minimales} Let $U$ be a doubly periodic array.
 \begin{enumerate}

 \item For any $l\in\Sigma_0$, $u^l\subset U$ and $F,F'\in \fF(u^l)$ we have that $\Delta(F)=\Delta(F')$, so that we may write $\Delta(u^l)$.
  \item $\Delta(u^l)\subseteq \Delta(U)$ for all $l\in \Sigma_0$ and if $k<_T l\in\Sigma_0$  then $\Delta(u^k)\subseteq \Delta(u^l)$. 
  \item For any $l\in\Sigma_0$, the set $\Delta(u^l)$ always exists.
  \item The set $\Delta(U)$ is exactly the footprint (see \cite[p. 1615]{Blah}) of $\bL(U)$, and it is completely determined by any of its Groebner basis.
  \item For any $F\in \fF(u^l)$ we have $F\subset \bL(U)$ implies $\langle F\rangle =\bL(U)$. In fact, $F$ is a Groebner basis for $\bL(U)$ by Definition~\ref{base de u}(3)  and \cite[Definition 2.5]{Cox}.
  \item For any $F\in \fF(u^l)$, we always may construct a ``normalized set'' $F'\in \fF(u^l)$; that is, satisfying the following property: for any $f\in F'$ and for all $m\in \supp(f)\setminus\{LP(f)\}$ we have $m \preceq LP(f')$, for all $f'\in F'$; that is, $m\in \Delta(u^l)$ \cite[Section 6]{Sakata 2}. 
  \item As we have commented, for any $\boldsymbol{\alpha}\in\R$, the equality $\D_{\boldsymbol{\alpha}}\left(\overline{\bL(U)}\right)=\D_{\boldsymbol{\alpha}}\left(\bL(U)\right)$ holds. Then, by \cite[Proposition 5.3.1]{Cox} or \cite[p. 1617, Theorem]{Blah} we have that $\left|\D_{\boldsymbol{\alpha}}\left(\overline{\bL(U)}\right)\right|=|\Delta(U)|$  (see also \cite[p. 1202]{Sakata}).
  \item If $F$ is a reduced Groebner basis for $\bL(U)$ then $LP(F)\subset I$ and, for any $\boldsymbol{\alpha}\in \R$, $D_{\boldsymbol{\alpha}}(F)=D_{\boldsymbol{\alpha}}\left(\overline{\bL(U)}\right)$
 \end{enumerate}
 \end{remark}

 Each iteration in the BMSa gives us a minimal set of polynomials for $u=u^{l+1}$ from such a set $u^l$ and the $\Delta$-set $\Delta(u^l)$. The construction of $\Delta(u^{l+1})$ is based on the following remark.
 
 \begin{remark}\label{resumen condiciones crece delta}
 Suppose that $f\in F\in \fF(u^l)$, and $f[u]_l\neq 0$. Then, by the Agreement Theorem and Sakata-Massey Theorem in \cite{Blah}, and  Lemma 5 and Lemma 6 in \cite{Sakata 2} one of the two following options hold:
 \begin{enumerate}
  \item $l-LP(f)\in\Delta(u^l)$ and then $LP(f)$ will be a defining point of $\Delta(u^{l+1})$.
  
  \item $l-LP(f)\not\in\Delta(u^l)$ and then, $\Delta(u^{l+1})$ will have at least one more point, $l-LP(f)$ itself; in fact,  $\Delta_{l-LP(f)}\subset \Delta(u^{l+1})$
 \end{enumerate}
\end{remark}
  
Before giving a brief description of the Sakata's algorithm we show some previous basic procedures used in it. 

For a minimal set of polynomials $F=\{f^{(1)},\dots,f^{(d)}\}$ of $\bL(u^l)$, with $LP(f^{(i)})=\left(s_1^{(i)},s_2^{(i)}\right)$, for $i=1,\dots,d$, we set $F_\bL=F\cap\bL(u^{l+1})$ and $F_N=F\setminus F_\bL$. We also consider $G=\{g^{(1)},\dots,g^{(d-1)}\}$, mentioned in the paragraph below Definition~\ref{base de u}.\\

\begin{theorem}
 [Berlekamp procedure.  Lemmas 5,6 in \cite{Sakata 2}] Let $f^{(a)}\in F$ and $g^{(b)}\in G$ such that $f^{(a)}\in \bL(u^l)$, $g^{(b)}\in \bL(u^{k})$, for some $k<_T l\in I$, with $f^{(a)}[u]_l=w_a\neq 0$ and $g^{(b)}[u]_{k}=v_b\neq 0$. 

 We define
 \begin{eqnarray*}
  r_1&=&\max\{s_1^{(a)},l_1-s_1^{(b)}+1\},\\
  r_2&=&\max\{s_2^{(a)},l_2-s_2^{(b+1)}+1\}\;\text{and}\\
  \mathbf{e}&=&\left(r_1-l_1+s_1^{(b)}-1,\,r_2-l_2+s_2^{(b+1)}-1 \right).
 \end{eqnarray*}
 Then, setting $r=(r_1,r_2)$, we have that
 $$ h_{f^{(a)},g^{(b)}}=\mathbf {X}^{r-s^{(a)}}f^{(a)}-\frac{w_a}{v_b}{\bf X}^{\mathbf{e}}g^{(b)}\in\bL(u^{l+1}).$$

\end{theorem}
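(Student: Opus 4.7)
The plan is to verify $h_{f^{(a)},g^{(b)}}\in\bL(u^{l+1})$, i.e.\ that $h[U]_n=0$ for every $n\in\Sigma_{LP(h)}$ with $n\leq_T l$. I would begin by identifying $LP(h)$. By the choice of $r_1,r_2$ we have $r\succeq s^{(a)}$, so the first summand $\mathbf{X}^{r-s^{(a)}}f^{(a)}$ has leading power exactly $r$; the shift $\mathbf{e}$ is defined so that the second summand $\mathbf{X}^{\mathbf{e}}g^{(b)}$ also has leading power $r$. Thus both summands share the monomial $\mathbf{X}^r$, and after combining, $LP(h)\preceq r$ (with equality in the normalised case treated in the algorithm).

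Next I would fix $n$ with $r\preceq n$ and $n\leq_T l$ and compute $h[U]_n$, using the shift-invariance $(\mathbf{X}^v f)[U]_n=f[U]_n$ that follows directly from the definition. This reduces $h[U]_n$ to $f^{(a)}[U]_n-\tfrac{w_a}{v_b}\,g^{(b)}[U]_n$. For $n<_T l$: since $r\succeq s^{(a)}$, we have $n\in\Sigma_{s^{(a)}}$, so $f^{(a)}\in\bL(u^l)$ forces $f^{(a)}[U]_n=0$. For the second term, I would use the construction of $r$ and $\mathbf{e}$ (together with the position of $LP(g^{(b)})$ as recorded when $g^{(b)}$ entered $G$) to argue that every such $n$ satisfies $n\in\Sigma_{LP(g^{(b)})}^{k}$, so $g^{(b)}\in\bL(u^k)$ yields $g^{(b)}[U]_n=0$. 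This range argument is where the defining points $s^{(b)}, s^{(b+1)}$ of the $\Delta$-set enter: they control precisely which $n\succeq r$ with $n\leq_T l$ fall strictly before $k$ in the monomial ordering.

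At the critical value $n=l$, the computation becomes $h[U]_l=w_a-\tfrac{w_a}{v_b}g^{(b)}[U]_l$. The very definition of $\mathbf{e}$ is designed so that the evaluation of the shifted polynomial at $l$ reproduces the failure value $v_b$ of $g^{(b)}$ at $k$; once this identity is granted, the chosen scalar $w_a/v_b$ cancels the defect and $h[U]_l=w_a-(w_a/v_b)v_b=0$. Combining the two cases, $h$ vanishes throughout $\Sigma_{LP(h)}^{l+1}$, proving $h\in\bL(u^{l+1})$.

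The main obstacle is justifying the identity $(\mathbf{X}^{\mathbf{e}}g^{(b)})[U]_l=v_b$ at $n=l$. A naive shift-invariance calculation shows $(\mathbf{X}^{\mathbf{e}}g^{(b)})[U]_l=g^{(b)}[U]_l$, so this identity does not come from the shift alone; it must be extracted from the structural invariant that Sakata's algorithm maintains on $G$, namely that $LP(g^{(b)})$ and its failure record $k$ are registered in exactly the way that makes the translation $l-k$ absorbed by $\mathbf{e}$ yield the same $U$-entries. I would unpack this using Lemmas~5 and~6 of \cite{Sakata 2}, carefully tracking how $s^{(b)}$ and $s^{(b+1)}$ constrain the leading power of $g^{(b)}$ at the moment it was inserted into $G$, and confirming that the formulas for $r_1,r_2$ and $\mathbf{e}$ in the statement are precisely the minimal shifts compatible with both $r\succeq s^{(a)}$ and the alignment $\mathbf{e}+LP(g^{(b)})=r$.
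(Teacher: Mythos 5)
The paper gives no proof of this statement at all: it is quoted as Lemmas 5 and 6 of \cite{Sakata 2}, so your attempt can only be judged on its own merits. Your overall plan (show $h[u]_n=0$ for all $n\in\Sigma_{LP(h)}^{l+1}$, splitting into $n<_T l$ and $n=l$) is the right one, but the central computation contains a genuine error. You assert that $\mathbf{e}$ is chosen so that $\mathbf{X}^{\mathbf{e}}g^{(b)}$ also has leading power $r$, and you then reduce $h[U]_n$ to $f^{(a)}[U]_n-\tfrac{w_a}{v_b}g^{(b)}[U]_n$. Neither is correct. In Sakata's bookkeeping the auxiliary polynomial $g^{(b)}$ is recorded together with its failure point $k$ via the identity $k-LP(g^{(b)})=\bigl(s_1^{(b)}-1,\,s_2^{(b+1)}-1\bigr)$ (this is exactly why the defining points of $F$, not of $G$, appear in $\mathbf{e}$), so that
\[
LP\bigl(\mathbf{X}^{\mathbf{e}}g^{(b)}\bigr)=\mathbf{e}+LP(g^{(b)})=r-l+k,
\]
which is strictly below $r$ in the monomial order since $k<_T l$. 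Hence $LP(h)=r$ with no possible cancellation of the leading term, and in the single sum $h[U]_n=\sum_{m\in\supp(h)}h_m\,u_{m+n-r}$ the second summand contributes $g^{(b)}[U]_{\,n-(l-k)}$, \emph{not} $g^{(b)}[U]_n$: each part of $h$ is evaluated relative to the common reference point $r=LP(h)$, which translates the index for the $g$-part by $k-l$.

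Once this is corrected the proof closes immediately, with no appeal to any further ``structural invariant'': at $n=l$ the $g$-part is evaluated at $l-(l-k)=k$, giving $h[U]_l=w_a-\tfrac{w_a}{v_b}v_b=0$; and for $r\preceq n<_T l$ one uses that the monomial order is translation-invariant, so $n-(l-k)<_T k$, whence $g^{(b)}\in\bL(u^k)$ gives $g^{(b)}[U]_{n-(l-k)}=0$ (or the term vanishes by definition if $n-(l-k)\notin\Sigma_{LP(g^{(b)})}$), while $f^{(a)}\in\bL(u^l)$ gives $f^{(a)}[U]_n=0$ since $n\in\Sigma_{s^{(a)}}$. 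The $\max$'s in the definition of $r_1,r_2$ serve only to guarantee $r-s^{(a)}\succeq(0,0)$ and $\mathbf{e}\succeq(0,0)$. Your closing paragraph correctly senses that something is wrong at $n=l$ (you note that naive shift-invariance gives $g^{(b)}[U]_l$ rather than $v_b$), but you misdiagnose the cause and defer the key identity to the cited lemmas instead of proving it; as written, the argument has a gap precisely at the step that makes the construction work.
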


We note that $s_1^{(b)}$ and $s_2^{(b+1)}$ refer to elements of $F$ and not $G$. Now, we establish two procedures to be used in the algorithm.

\begin{procedure}\cite[Theorem 1]{Sakata 2}.\label{proc 1}
 If $f^{(i)}\in F_N$ and $l\in s^{(i)}+\Delta(u^l)$.
\begin{enumerate}
 \item Find $1\leq j\leq d-1$ such that $l_1<s_1^{(i)}+ s_1^{(j)}$ and    $l_2<s_2^{(i)}+ s_2^{(j+1)}$.
 \item In the set F we replace $f^{(i)}$ by $h_{f^{(i)},g^{(j)}}$ obtained by the Berlekamp procedure. The point $s^{(i)}$ will be a defining point of $\Delta(u^{l+1})$ as well.
\end{enumerate}
\end{procedure}

\begin{procedure}\cite[Theorem 2]{Sakata 2}.\label{proc 2}  If $f^{(i)}\in\F_N$ and $l\not\in s^{(i)}+\Delta(u^l)$ then one consider all the following defining points and constructions $h_{f^{(a)},g^{(b)}}$ to replace $f^{(i)}$ (and, possibly, some elements of $G$) with the suitable new polynomials in order to get a new $F\in\fF(u^{l+1})$.

\begin{enumerate}
 \item $S=\left(l_1-s_1^{(i)}+1,l_2-s_2^{(i+1)}+1\right)$; with $f^{(i+1)}\in F_N$ and $1\leq i<d$. 
 Then find $k\in\{1,\dots,d\}$ such that $s^{(k)}\prec S$ and set  $h_{f^{(k)},g^{(i)}}$.

\item $S=\left(l_1-s_1^{(k)}+1,s_2^{(i)}\right)$; for some $k<d$, with $f^{(k)}\in F_N$ and $s^{(i)}\prec S$. Then set $h_{f^{(i)},g^{(k)}}$.

\item $S=\left(l_1+1,s_2^{(i)}\right)$ with $i < d$. 
Then set $h=X_1^{l_1-s_1^{(i)}+1}\cdot f^{(i)}$.

\item $S=\left(s_1^{(i)},l_2-s_2^{(j)}+1\right)$ for $j>2$ with $f^{(j)}\in F_N$ and $s^{(i)}\prec S$. Then set $h_{f^{(i)},g^{(j-1)}}$.

\item $S=\left(s_1^{(i)},l_2+1\right)$. Then set $h=X_2^{l_2-s_2^{(i)}+1}\cdot f^{(i)}$.
\end{enumerate} 
\end{procedure}

Now, we can show a brief scheme of the Sakata's algorithm. See \cite[p. 331]{Sakata 2} for a detailed description.

\begin{algorithm}[Sakata]\label{algoritmo sakata}
 We start from a finite doubly periodic array, $u\subset U$.
 \begin{enumerate}
  \item[$\qed$] Initialize $|l|=0$; that is $l=(0,0)$, $F=\{1\}$, $G=\emptyset$ and $\Delta=\emptyset$.
 \item[$\qed$] For $l\geq (0,0)$,
 \item For each $f^{(i)}\in F$ for which $f^{(i)}\in F_N$ we do
 \begin{description}
  \item[-] If $l\in s^{(i)}+\Delta(u^l)$ then replace $f^{(i)}$ by Procedure~\ref{proc 1}. 
  \item[-] Otherwise, replace $f^{(i)}$ by one or more polynomials by Procedure~\ref{proc 2}.
 \end{description}
 \item Then form the new $F$, $G$ and $\Delta(u^{l+1})$.
 \item Set $l:=l+1$.
 \end{enumerate}
\end{algorithm}

Let $l\in \Sigma_0$, $F\in \fF(u^l)$ and consider the ideal $\langle F\rangle$ in $\Le[\mathbf{X}]$. We suppose WLOG that the elements in $F$ are written in their normal form. Then, on the one hand, it may happen that $F$ is not a Groebner basis for $\langle F\rangle$; on the other hand, even if $F$ is a Groebner basis for $\langle F\rangle$, it may happen that $F$ is not a Groebner basis for $\bL(U)$. As we have commented in Introduction, in \cite{Cox et al Using,Hackl,rubio,Sakata 2} the reader may find termination criteria based on the shapes of all possible extensions from $\Delta(u^l)$ to $\Delta(U)$. We shall show a new termination criterion based on the existence of the set $\B$. Before this, in the next section, we shall give an improvement of the framework for applying the locator decoding algorithm in such a way that we may consider a translated table.

\section{A new framework for locator decoding.}

Locator decoding in (bivariate) abelian codes was introduced in \cite{Sakata} (see also \cite{Blah}). Let us recall, and extend slightly, the basic ideas.
 
 Let $C$ be a bivariate code over $\F(r_1,r_2)$ with defining set $\D_{\boldsymbol{\alpha}}\left(C\right)$, with respect to some fixed $\boldsymbol{\alpha}\in \R$. Suppose a word $c\in C$ was sended and the polynomial $c+e$ in $\F(r_1,r_2)$ has been received. So that, the polynomial $e$ represents the error that we want to find out. To do this, we define the locator ideal in $\Le(r_1,r_2)$, which is defined originally in $\Le[\mathbf{X}]$ (see \cite{Blah,Sakata}).

 \begin{definition}\label{ideal locator}
  In the setting above, the locator ideal for $e$ is
  \[L(e)=\left\{f\in \Le(r_1,r_2)\tq f(\boldsymbol{\alpha}^n)=0,\;\forall n\in \supp(e)\right\}.\]
 \end{definition}
 
 Having in mind that $\Le|\F$ is a splitting field for $U$, it is easy to see that $\D_{\boldsymbol{\alpha}}\left(L(e)\right)=\supp(e)$. Our objective is to find  the defining set of $L(e)$ and hence $\supp(e)$. The final step (that we will not comment) will be to solve a system of equations to get the coefficients of $e$ (in case $q>2$). To do this, we shall connect $L(e)$ to the linear recurring relations as follows. Based on the so called syndromes of the received polynomial, we are going to determine a suitable doubly periodic array $U=\left(u_n\right)_{n\in \Sigma_0}$ such that the equality $L(e)=\overline{\bL(U)}$ holds (see Remark~\ref{hechos sobre lambda de U}). We begin dealing with syndromes. As it is usual in locator decoding, we first consider (theoretically) the syndrome values of $e\in \F(r_1,r_2)$: let $\tau\in \Z_{r_1}\times \Z_{r_2}$ and define $U=\left(u_n\right)_{n\in \Sigma_0}$, such that $u_n= e\left(\boldsymbol{\alpha}^{\tau+n}\right)$. Clearly, $U$ is an infinite doubly periodic array.

\begin{definition}\label{def de U}
 Let $e\in \F(r_1,r_2)$, $\tau\in \Z_{r_1}\times \Z_{r_2}=I$ and define $U=\left(u_n\right)_{n\in \Sigma_0}$, such that $u_n= e\left(\boldsymbol{\alpha}^{\tau+n}\right)$. We call $U$ the syndrome table afforded by $e$ and $\tau$.
\end{definition}

 In practice, we do not know all values of $U$. Let us return, for a moment, to the error correcting context. By the notion of defining set, for each $\tau+n\in \D_{\boldsymbol{\alpha}}(C)$, one has that $(c+e)\left(\boldsymbol{\alpha}^{\tau+n}\right)=e\left(\boldsymbol{\alpha}^{\tau+n}\right)$; so, the syndrome values of the error polynomial $e$ are known for all elements in $\D_{\boldsymbol{\alpha}}(C)$.

 Now we state the mentioned equality of ideals. The proof of the following theorem is (\textit{mutatis mutandi}) similar to that of \cite[p. 1202]{Sakata}.

\begin{theorem}\label{igualdad ideales}
 Let $U$ be the syndrome table afforded by $e$ and $\tau$. For any $f\in \Le(r_1,r_2)$ the following conditions are equivalent:
 \begin{enumerate}
  \item $f\in L(e)$.
  \item $\sum_{s\in\supp(e)}e_{s}\boldsymbol{\alpha}^{s\cdot n} f\left(\boldsymbol{\alpha}^{s}\right)=0,\;\text{for all}\; n\in \Sigma_\tau$.
  \item $f\in \overline{\bL(U)}$.
 \end{enumerate}
Consequently, $L(e)=\overline{\bL(U)}$.
\end{theorem}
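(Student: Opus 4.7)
The plan is to close the loop (1) $\Rightarrow$ (2), (2) $\Leftrightarrow$ (3), (2) $\Rightarrow$ (1). The first implication is immediate: if $f(\boldsymbol{\alpha}^{s})=0$ for every $s\in\supp(e)$, then every summand in (2) vanishes identically in $n$.

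For (2) $\Leftrightarrow$ (3), I would pick a representative $\tilde f\in\Le[\mathbf{X}]$ of $f\in\Le(r_{1},r_{2})$, noting that by Remark~\ref{hechos sobre lambda de U}(2) one has $f\in\overline{\bL(U)}$ if and only if $\tilde f\in\bL(U)$. Writing $\tilde f=\sum_{m\in\supp(\tilde f)}f_{m}\mathbf{X}^{m}$ with $s=LP(\tilde f)$ and inserting $u_{k}=e(\boldsymbol{\alpha}^{\tau+k})=\sum_{t\in\supp(e)}e_{t}\,\boldsymbol{\alpha}^{t\cdot(\tau+k)}$ into $\tilde f[U]_{n}=\sum_{m}f_{m}u_{m+n-s}$, I would swap the order of summation. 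The inner sum over $m$ collapses to $f(\boldsymbol{\alpha}^{t})$, leaving
$$\tilde f[U]_{n}=\sum_{t\in\supp(e)} e_{t}\,\boldsymbol{\alpha}^{t\cdot(\tau+n-s)}\,f(\boldsymbol{\alpha}^{t})\qquad(n\in\Sigma_{s}).$$
The translation $n\mapsto n'=\tau+n-s$ is a bijection $\Sigma_{s}\to\Sigma_{\tau}$ (since $n\succeq s$ is equivalent to $n'\succeq\tau$), so $\tilde f[U]_{n}=0$ for all $n\in\Sigma_{s}$ is literally condition (2) after relabeling; for $n\notin\Sigma_{s}$ the value $\tilde f[U]_{n}$ is zero by convention, so $\tilde f\in\bL(U)$.

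For (3) $\Rightarrow$ (1), I would run a character-independence argument on (2). Because $U$ is $(r_{1},r_{2})$-periodic, the values $\boldsymbol{\alpha}^{t\cdot n'}$ depend only on $n'$ modulo $(r_{1},r_{2})$, so (2) holding on all of $\Sigma_{\tau}$ is equivalent to (2) holding on any complete system of residues of $I$. Since $\boldsymbol{\alpha}\in\R$, the matrix $\bigl(\boldsymbol{\alpha}^{t\cdot n'}\bigr)_{t,n'\in I}$ is the invertible bivariate DFT matrix, so $e_{t}\,f(\boldsymbol{\alpha}^{t})=0$ for every $t\in I$; using that $e_{t}\neq 0$ exactly when $t\in\supp(e)$, we recover $f(\boldsymbol{\alpha}^{t})=0$ on $\supp(e)$, i.e.~(1).

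The main obstacle I expect is bookkeeping rather than conceptual: checking that the translation $n\mapsto\tau+n-s$ is genuinely a bijection $\Sigma_{s}\to\Sigma_{\tau}$ independent of the particular representative and leading power $s=LP(\tilde f)$ chosen, and that the boundary values outside $\Sigma_{s}$ really are governed by the defining convention $\tilde f[U]_{n}=0$. Once that is in place, the three equivalences give $L(e)=\overline{\bL(U)}$.
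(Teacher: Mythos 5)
Your proof is correct and follows essentially the same route as the paper, which simply defers to the computation in Sakata's 1991 paper: expand $u_{m+n-s}$ via the definition of the syndrome table, interchange the two sums so the inner one collapses to $f(\boldsymbol{\alpha}^{t})$, relabel by the shift $n\mapsto\tau+n-s$, and invert the bivariate DFT matrix to pass from (2) back to (1). You have in fact supplied the bookkeeping (the bijection $\Sigma_{s}\to\Sigma_{\tau}$ and the reduction to a complete residue system) that the paper leaves implicit in its \emph{mutatis mutandi} reference.
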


Theorem~\ref{igualdad ideales}, together with Remark~\ref{los delta conjuntos y los minimales}(8), say that if $F$ is a Groebner basis of $\bL(U)$, then $$\D_{\boldsymbol{\alpha}}(L(e))=\D_{\boldsymbol{\alpha}}(\overline{\bL(U)})=\D_{\boldsymbol{\alpha}}(F)$$
 according to the notation of Section II.

The ideal $\bL(U)$ drives us to the framework used in the BMSa in the specific case of $U$, the syndrome table afforded by $e$. 

 \subsection{Obtaining a true Groebner basis for the ideal $\bL(U)$. Updates and sufficient conditions.}

Suppose that, following the BMSa we have constructed for $l\in I$, the foorprint $\Delta(u^l)$ and the minimal set of polynomials $F_l\in \fF(u^l)$. In this section, we prove that, under the assumption $\omega(e)\leq 4$, if $\l\not\in \B$ then $F_l=F_{l+1}$ and, if $l\in \B$ is its maximum, then $\langle F_{l+1}\rangle=\bL(U)$; that is, the normal form of the elements of $F_{l+1}$ is a Groebner basis for $\bL(U)$.

\begin{lemma}\label{cota para l_1+1 por l_2+1}
 Let $U$ be the syndrome table afforded by $e$ and $\tau$, with $\omega(e)\leq t\leq 4$. Suppose that, following the BMSa we have constructed, for $l=(l_1,l_2)$, with $u=u^l$, the sets  $\Delta(u)=\Delta$ and $F\in\fF(u)$.  We also suppose that there is $f\in F$ such that $f[u]_l\neq 0$ and that $l\not\in LP(F)+\Delta$; that is, the delta-set will increase (see Remark~\ref{resumen condiciones crece delta}(2)).
 Then $$(l_1+1)(l_2+1)\leq 2t+1.$$
\end{lemma}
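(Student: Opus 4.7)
The plan is to exploit the containment of several sets inside the final footprint $\Delta(U)$, whose cardinality is bounded by $\omega(e)\leq t$ thanks to Remark~\ref{los delta conjuntos y los minimales}(7) applied to the equality $L(e)=\overline{\bL(U)}$ of Theorem~\ref{igualdad ideales}. Fix $f\in F$ with $f[u]_l\neq 0$ and set $s:=LP(f)=(a,b)$. Since $l\notin LP(F)+\Delta$, we have $l-s\notin\Delta(u^l)$, so Remark~\ref{resumen condiciones crece delta}(2) gives $\Delta_{l-s}\subseteq\Delta(u^{l+1})$. Combined with the monotonicity $\Delta(u^l)\subseteq\Delta(u^{l+1})$ from Remark~\ref{los delta conjuntos y los minimales}(2), this yields $\Delta(u^l)\cup\Delta_{l-s}\subseteq\Delta(u^{l+1})\subseteq\Delta(U)$.

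The first intermediate step is to establish that $\Delta_s\setminus\{s\}\subseteq\Delta(u^l)$. Writing $s=s^{(k)}$, for any $(m_1,m_2)\prec s^{(k)}$ either $m_1<s_1^{(k)}$, in which case $(m_1,m_2)\in\Delta_{(s_1^{(k)}-1,\,s_2^{(k+1)}-1)}\subseteq\Delta(u^l)$ by~\eqref{los delta para G}, using that $m_2\leq s_2^{(k)}\leq s_2^{(k+1)}-1$; or else $m_2<s_2^{(k)}$, and dually $(m_1,m_2)\in\Delta_{(s_1^{(k-1)}-1,\,s_2^{(k)}-1)}$. The boundary cases $k=1$ and $k=d$ are ruled out by the forced values $s_2^{(1)}=0$ and $s_1^{(d)}=0$. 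Setting $V:=(\Delta_s\setminus\{s\})\cup\Delta_{l-s}$, we therefore obtain $|V|\leq|\Delta(U)|\leq t\leq 4$.

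The core of the proof is to convert the cardinality bound $|V|\leq t$ into $(l_1+1)(l_2+1)\leq 2t+1$. Writing $l-s=(A,B)$ so that $(l_1+1)(l_2+1)=(a+A+1)(b+B+1)$, I would split into three cases. \emph{Case 1} ($a\leq A$ and $b\leq B$): then $\Delta_s\subseteq\Delta_{l-s}$, hence $|V|=(A+1)(B+1)\leq t$; a direct enumeration of the pairs $(A,B)$ with $(A+1)(B+1)\leq 4$, together with $(a+A+1)(b+B+1)\leq(2A+1)(2B+1)$, shows $(a+A+1)(b+B+1)\leq 9$. \emph{Case 2} ($a>A$ and $b>B$): a direct count of $\Delta_s\cap\Delta_{l-s}$ gives $|V|=(a+1)(b+1)-1$, so $(a+1)(b+1)\leq 5$; with $a,b\geq 1$ this forces $(a,b)=(1,1)$ and $A=B=0$, whence $(l_1+1)(l_2+1)=4$. \emph{Case 3} (exactly one of $a>A$, $b>B$; by symmetry assume $a>A$, $b\leq B$): here $|V|=(a+1)(b+1)+(A+1)(B-b)-1\leq t$, and a short enumeration in $b\in\{0,1\}$ (since $b\geq 2$ already forces $(a+1)(b+1)\geq 6>5$) yields $(a+A+1)(b+B+1)\leq 9$, the maximum being realized at $s=(2,0)$, $l=(2,2)$.

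The main obstacle is the enumeration in Case~3, where the four parameters $a,A,b,B$ are coupled nontrivially and the bound $2t+1=9$ is actually attained, so the argument has to be delicate enough not to lose tightness. Cases~1 and~2 reduce essentially to one-parameter verifications, and the symmetric branch of Case~3 (with the roles of the two coordinates swapped) is identical.
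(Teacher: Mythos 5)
Your proposal is correct, and it reaches the conclusion by a genuinely different route than the paper. The paper's proof fixes $t=4$, lists the six possible footprints of cardinality at most $3$, and runs an exhaustive case analysis on $l_1\in\{7,6,\dots,0\}$ (reduced to boundary values of $l$), in each case exhibiting enough forced new points of $\Delta(u^{l+1})$ to exceed the cardinality bound. You instead isolate two structural inclusions --- $\Delta_{LP(f)}\setminus\{LP(f)\}\subseteq\Delta(u^l)$ (which follows, as you argue, from the staircase incomparability of the defining points, and which the paper never states explicitly) and the Sakata--Massey inclusion $\Delta_{l-LP(f)}\subseteq\Delta(u^{l+1})$ of Remark~\ref{resumen condiciones crece delta}(2) --- and then feed their union into the same global bound $|\Delta(U)|=|\supp(e)|=\omega(e)\leq t$ coming from Remark~\ref{los delta conjuntos y los minimales}(7) and Theorem~\ref{igualdad ideales}. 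Your inclusion--exclusion identities in the three cases are correct (I checked the Case~3 count $(a+1)(b+1)+(A+1)(B-b)-1$ and the enumeration over $b\in\{0,1\}$; the maximum $9$ at $s=(2,0)$, $l=(2,2)$ is right and matches the tightness of $\B_{2t+1}$). What your approach buys is a four-parameter inequality that makes the provenance of the constant $2t+1$ visible and explains why the hypothesis $t\leq 4$ is genuinely needed (e.g.\ Case~1 with $(A,B)=(2,1)$ already violates the conclusion at $t=6$); what the paper's approach buys is that it stays entirely within the data ($\Delta$, $F$, $G$) that the BMSa actually manipulates, which is convenient for the subsequent lemmas. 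One minor remark: like the paper, your explicit enumerations are carried out for $t=4$ only; the cases $t\leq 3$ are strictly contained in them, so this is harmless, but it is worth saying so as the paper does.
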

\begin{proof}
We shall prove the result for $t=4$. The other cases are similar and simpler than this. Suppose that $F=\left\{f^{(1)},\dots,f^{(d)}\right\}$ with $LP(f^i)= s^{(i)}$ for $i=1,\dots, d\geq 2$. Setting $f=f^{(i)}$ we have, by hypothesis, $f^{(i)}[u]_l\neq 0$ and $l\not\in s^{(i)}+\Delta$.
	
First note that $|\Delta|\leq 3$ because the size will be increased. So let us list all possible delta-sets: $\Delta_{11}=\left\{(0,0)\right\}$, $\Delta_{21}=\left\{(0,0),(0,1)\right\}$, $\Delta_{22}=\left\{(0,0),(1,0)\right\}$, $\Delta_{31}=\left\{(0,0), (0,1),(0,2)\right\}$, $\Delta_{32}=\left\{(0,0),(1,0),(0,1)\right\}$ and $\Delta_{33}=\left\{(0,0),(1,0),(2,0)\right\}$.

We also note that, by definition of delta-set, if $l\not\in LP(F)+\Delta$ then $\Sigma_l\cap (LP(F)+\Delta)=\emptyset$.
	
\textbf{Case a}: $l_1>6$. By paragraph above, we only have to consider $l_1=7$. As $s^{(1)}_1\leq 3$, we have that $l_1-s^{(i)}_1\geq 7-3=4$, thus, at least $(3,0),(4,0)$ increase $\Delta(u^{l+1})$, which is impossible. So we should have $l_1\leq 6$.

\textbf{Case b}: $l_1=6$ and $l_2\geq 1$. Again, we only have to consider $l=(6,2)$. Then, the points $(3,0)$ and $(3,1)$ will be added. If $|\Delta|=2$ then we have to add, in addition, $(2,0)$ and $(2,1)$, and for $\Delta=\Delta_{11}$ we have to add besides the points below, $(1,0)$ and $(1,1)$. In all cases we get $|\Delta(u^{l+1})|>4$, which is impossible.

\textbf{Case c}: $l_1=5$ and $l_2\geq 1$, so we set $l=(5,1)$. If $s^{(1)}_1=3$ and $i=1$ then $(2,1)\in \Delta(u^{l+1})$ which implies that $(0,1),(1,1)\in \Delta(u^{l+1})$ too. In case $i=2$, then at least $(3,0)$ and $(4,0)$ will be added. If $s^{(1)}_1=2$ then $l_1-s^{(i)}_1\geq 3$ so that, for $i=1$ we have that $(2,0),\;(2,1),\;(3,0),\;(3,1)\in  \Delta(u^{l+1})$; for $i=2$ then $l_1-s^{(i)}_1\geq 4$, so $(2,0),\;(3,0),\;(4,0)\in  \Delta(u^{l+1})$.  The case $s^{(1)}_1=1$ is trivial and then in all cases we get $|\Delta(u^{l+1})|>4$, which is impossible.
	
\textbf{Case d}: $l_1=4$ and $l_2\geq 1$, so that, set $l=(4,1)$. If $s^{(i)}_1=3$ then we have to add at least $(0,1)$ and $(1,1)$, if $s^{(i)}_2=2$ then we must have $i=2$ and we should add at least $(1,1)$ and $(2,1)$, for $\Delta_{32}$ and $(0,1)$ in addition, for $\Delta_{21}$. For $s^{(i)}_1=1$ then $(1,l_2-s^{(d)}_2)$, $(2,l_2-s^{(d)}_2)$ and $(3,l_2-s^{(d)}_2)$ should be added. All of them are impossible.

\textbf{Case e}:  $l_1=3$ and $l_2\geq 2$; so that $l=(3,2)$. If $i=d$ then we add at least $(2,l_2-s^{(d)}_2)$ and $(3,l_2-s^{(d)}_2)$ for those $|\Delta|=3$ and, in addition, $(0,l_2-s^{(d)}_2)$ and $(1,l_2-s^{(d)}_2)$ for those $|\Delta|\leq 2$. For $\Delta_{32}$ and $i=2$, we have $l-s^{(2)}=(l_1-s^{(2)}_1,1)$ so we add at least $(1,1)$ and $(2,1)$. Finally, the case $i=1$ is obvious and then in all cases we get $|\Delta(u^{l+1})|>4$, which is impossible. 

\textbf{Case f}: $l_1=2$ and $l_2\geq 3$. Take $l=(2,3)$ and repeat  \textbf{Case~e} changing $l_2$ by $l_1$; $s^{(d)}_2$ by $s^{(1)}_1$ and so.

\textbf{Case g}: $l_1=1$ and $l_2\geq 4$. Take $l=(1,4)$ and repeat \textbf{Case~d} with the adecuate changes, as above.

The last case, $l_1=0$ is immediate by Procedure~\ref{proc 2}.
\end{proof}

Lemma above says us that, for $\delta=2t+1$ if $f[u]_l\neq 0$ and that $l\not\in LP(F)+\Delta$ then $l\in \A$. Now we get $\B_{2t+1}$ by studying, among others, the pairs $(2t,0)$ and $(0,2t)$.

\begin{lemma}
 Let $U$ be the syndrome table afforded by $e$ and $\tau$, with $\omega(e)\leq t\leq 4$.  Suppose that, following the BMSa we have constructed, for $l=(l_1,l_2)$, with $u=u^l$ the sets  $\Delta(u)=\Delta$ and $F\in\fF(u)$.  If $l_k>2t-1$, for $k\in \{1,2\}$ then $f[u]_l=0$.
\end{lemma}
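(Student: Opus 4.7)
The strategy is to use the spectral form of the syndrome table $U$ together with a one-dimensional Vandermonde collapse. The crucial preliminary fact, obtained from Theorem~\ref{igualdad ideales}, Remark~\ref{los delta conjuntos y los minimales}(7) and the identity $\D_{\boldsymbol{\alpha}}(L(e))=\supp(e)$, is that $|\Delta(U)|=|\supp(e)|\leq t$. Since $\Delta(u^l)\subseteq \Delta(U)$ and the bottom row $\{(j,0):0\leq j<s^{(1)}_1\}$ lies in $\Delta_1$ while the left column $\{(0,j):0\leq j<s^{(d)}_2\}$ lies in $\Delta_{d-1}$, every defining point satisfies $s^{(i)}_1\leq t$ and $s^{(i)}_2\leq t$.

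Next, for $f=f^{(i)}\in F$ with $s=LP(f)$, substituting $u_n=\sum_{p\in\supp(e)}e_p\boldsymbol{\alpha}^{(\tau+n)\cdot p}$ into the definition of $f[u]_n$ and pulling the $f$-sum inside yields the spectral form
\[ f[u]_n=\sum_{p\in\supp(e)}c_p\,\boldsymbol{\alpha}^{n\cdot p},\qquad c_p:=e_p\,\boldsymbol{\alpha}^{(\tau-s)\cdot p}\,f(\boldsymbol{\alpha}^p). \]
If $s\not\preceq l$ then $f[u]_l=0$ by the defining convention, so I may assume $s\preceq l$, that is, $s_1\leq l_1$ and $s_2\leq l_2$.

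Under the lex order with $X_1>X_2$, assume $l_1\geq 2t$ (the case $l_2\geq 2t$ is dual, and the graded case is analogous). For every $n_2\geq s_2$ the horizontal segment $\{(n_1,n_2):n_1\in[s_1,s_1+t-1]\}$ is contained in $\Sigma_s^l$, because $s_1+t-1\leq 2t-1<l_1$. Grouping the spectral sum by the first coordinate of $p$, set $P_1=\{p_1:p\in\supp(e)\}$ and $\mu_q(n_2):=\sum_{p:\,p_1=q}c_p\,\alpha_2^{n_2 p_2}$. The equations $f[u]_{(n_1,n_2)}=0$ for $n_1\in[s_1,s_1+t-1]$ form a $t\times|P_1|$ Vandermonde system with matrix $(\alpha_1^{n_1 q})_{n_1,q}$ in the unknowns $\mu_q(n_2)$; since $|P_1|\leq t$ and the $\alpha_1^{q}$ are pairwise distinct primitive-root powers, this forces $\mu_q(n_2)=0$ for every $q\in P_1$ and every $n_2\geq s_2$. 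Specializing $n_2=l_2$ then gives $f[u]_l=\sum_{q\in P_1}\alpha_1^{l_1 q}\mu_q(l_2)=0$. The dual case $l_2\geq 2t$ proceeds with the vertical strip $\{(l_1,n_2):n_2\in[s_2,s_2+t-1]\}\subseteq\Sigma_s^l$ and a grouping by the second coordinate of $p$; for the graded order one replaces the strip by a row or column inside the triangular region $\{n:|n|<|l|,\,n\succeq s\}$.

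The main obstacle is purely bookkeeping: in each combination of monomial ordering and of $k\in\{1,2\}$, one must exhibit $t$ consecutive indices of $\Sigma_s^l$ inside a single row or column at the height used in the collapse. The bound $s^{(i)}_k\leq t$ established in the first step is exactly what makes this possible, since with $\max(l_1,l_2)\geq 2t$ and $s_k\leq t$ the shifts $s_k,s_k+1,\dots,s_k+t-1$ all fit strictly below the large coordinate of $l$.
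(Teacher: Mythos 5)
Your proof is correct, but it takes a genuinely different route from the paper's. The paper disposes of this lemma in one line via the footprint mechanism of Remark~\ref{resumen condiciones crece delta}: since $|\Delta(U)|=|\supp(e)|=\omega(e)\leq t$ forces every defining point to satisfy $s^{(i)}_k\leq t$, a nonzero discrepancy $f[u]_l\neq 0$ at a point with $l_k\geq 2t$ gives $l_k-s^{(i)}_k\geq t$, and then both alternatives of the Sakata--Massey dichotomy are impossible: $l-LP(f)$ cannot lie in $\Delta$ (whose points have $k$-th coordinate at most $t-1$), and the alternative $\Delta_{l-LP(f)}\subseteq\Delta(u^{l+1})\subseteq\Delta(U)$ would force $|\Delta(U)|\geq t+1$. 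You use exactly the same two numerical inputs ($|\supp(e)|\leq t$ and $s^{(i)}_k\leq t$, both justified the same way), but you replace the Agreement/Sakata--Massey machinery by a direct computation: the DFT form of the syndrome table turns $f[u]_{\,\cdot}$ into an exponential sum with at most $t$ distinct frequencies per axis, and $t$ consecutive vanishings already guaranteed along a row or column inside $\Sigma_s^l$ annihilate all grouped coefficients by a Vandermonde argument, hence also the value at $l$ itself. Your version is more self-contained and slightly more general (nothing in it uses $t\leq 4$), at the cost of length and of some order-by-order bookkeeping that you correctly identify and resolve; the paper's version is essentially free given the machinery it has already quoted. Both arguments are sound.
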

\begin{proof}
Immediate from the fact that $l_1-s^{(1)}_1>t$ or $l_2-s^{(d)}_2>t$
\end{proof}

 The proof of the next lemma is a direct computation similar to that used in the previous one.

\begin{lemma}\label{puntos fuera de la escalera}
 Let $U$ be the syndrome table afforded by $e$ and $\tau$, with $\omega(e)\leq t\leq 4$.  Suppose that, following the BMSa we have constructed, for $l=(l_1,l_2)$, with $u=u^l$ the sets  $\Delta(u)=\Delta$ and $F\in\fF(u)$.  If $l=(l_1,l_2)$ is such that $(l_1+1)(l_2+1)> 2t+1$ then $l\not\in LP(F)+\Delta$ and hence $\Sigma_{l}\cap (\Delta+LP(F))=\emptyset$.
 
 Thus, if $n\in \Sigma_{(i,j)}$, with $(i,j)\in I$ satisfying $(i+1)(j+1) > 2t+1$ then $f[u]_n = 0$, for any $f\in F$.
\end{lemma}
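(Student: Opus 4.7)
The plan is to reduce the first statement to a finite case analysis on the staircase $\Delta=\Delta(u^l)$. By Remark~\ref{los delta conjuntos y los minimales}(2),(7), combined with Theorem~\ref{igualdad ideales} and $\D_{\boldsymbol{\alpha}}(L(e))=\supp(e)$, one has $|\Delta|\leq|\Delta(U)|=\omega(e)\leq t\leq 4$. The ordering constraints $s^{(1)}_1>\cdots>s^{(d)}_1=0$ and $0=s^{(1)}_2<\cdots<s^{(d)}_2$ force $d\leq 3$ when $t=4$, since $d\geq 4$ would require $|\Delta|\geq 3\cdot 1+2\cdot 1+1\cdot 1=6$. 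Hence only finitely many shapes of $\Delta$ are admissible.

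The central observation is that any $l\in LP(F)+\Delta$ has the form $l=s^{(i)}+m$ with $m\in\Delta_j$ for some $1\leq j\leq d-1$; since $m\preceq(s^{(j)}_1-1,s^{(j+1)}_2-1)$,
\[(l_1+1)(l_2+1)\leq (s^{(i)}_1+s^{(j)}_1)(s^{(i)}_2+s^{(j+1)}_2).\]
I would then verify that this upper bound is at most $2t+1$ for every admissible shape. The case $d=1$ is vacuous because $LP(F)+\Delta=\emptyset$. For $d=2$ with $s^{(1)}=(a,0)$, $s^{(2)}=(0,b)$ and $ab\leq t$, either choice of $i$ yields $2ab\leq 2t$. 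For $d=3$ the admissible tuples $(a_1,a_2,b_2,b_3)$ are $(2,1,1,2)$, $(2,1,1,3)$ and $(3,1,1,2)$; direct substitution in each of the six $(i,j)$-combinations gives a bound of at most $9=2t+1$. The main obstacle is precisely this bookkeeping---the $d=3$ case amounts to $3\times 6=18$ scalar inequalities---but each step is a one-line check mirroring the pattern of the previous lemma.

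The remaining assertions follow formally. If $n\succeq l$ then $(n_1+1)(n_2+1)\geq (l_1+1)(l_2+1)>2t+1$, so applying the first conclusion with $n$ in place of $l$ yields $n\notin LP(F)+\Delta$; this gives $\Sigma_l\cap(\Delta+LP(F))=\emptyset$. The same reasoning, applied to any $n\in\Sigma_{(i,j)}$ with $(i+1)(j+1)>2t+1$, shows $n\notin LP(F)+\Delta$, so $n-s\notin\Delta$ for every $f\in F$ with $s=LP(f)$; taking $F$ in normalized form (Remark~\ref{los delta conjuntos y los minimales}(6)) so that $\supp(f)\subseteq\{s\}\cup\Delta$, and combining with the minimal-set property of $F$ together with Remark~\ref{resumen condiciones crece delta}, forces $f[u]_n=0$.
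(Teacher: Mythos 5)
Your reduction of the first assertion to a finite check is correct and is precisely the ``direct computation similar to that used in the previous one'' that the paper invokes: $|\Delta|\le|\Delta(U)|=\omega(e)\le t$ does force $d\le 3$, the admissible $d=3$ footprints are indeed $(a_1,a_2,b_2,b_3)\in\{(2,1,1,2),(2,1,1,3),(3,1,1,2)\}$, and your bound $(l_1+1)(l_2+1)\le(s^{(i)}_1+s^{(j)}_1)(s^{(i)}_2+s^{(j+1)}_2)$ evaluates to at most $2t+1$ in every admissible case (the extremal value $9$ is attained at $(2,1,1,3)$ with $(i,j)=(1,2)$ and at $(3,1,1,2)$ with $(i,j)=(3,1)$, both with $t=4$). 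The monotonicity step giving $\Sigma_l\cap(\Delta+LP(F))=\emptyset$ is also fine.

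The one soft spot is your last sentence. From $n\notin LP(F)+\Delta$, Remark~\ref{resumen condiciones crece delta} together with normalization does not by itself ``force'' $f[u]_n=0$: it only says that $f[u]_n\neq 0$ would place $\Delta_{n-LP(f)}$ inside $\Delta(u^{n+1})$, and the size of $\Delta_{n-LP(f)}$ alone need not exceed $t$ (for instance $t=4$, $n=(4,1)$, $LP(f)=(3,0)$ gives $(n_1+1)(n_2+1)=10>9$ yet $|\Delta_{(1,1)}|=4\le t$). What actually produces the contradiction is the size of the union $\Delta\cup\Delta_{n-LP(f)}$, and bounding that requires the same shape-by-shape analysis again. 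The clean way to close is to quote the previous lemma (Lemma~\ref{cota para l_1+1 por l_2+1}): if $f[u]_n\neq 0$ and, by your first part, $n\notin LP(F)+\Delta$, then that lemma gives $(n_1+1)(n_2+1)\le 2t+1$, contradicting the hypothesis on $n$. With that single citation in place of the vague appeal to minimality, your argument is complete and matches the paper's intended route.
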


Let us summarize the results above in the following theorem.

\begin{theorem}\label{puntos fuera de la escalera global}
 Let $U$ be the syndrome table afforded by $e$ and $\tau$, with $\omega(e)\leq t\leq 4$. Suppose that, following the BMSa we have constructed, for $l=(l_1,l_2)$, and $u=u^l$, the sets  $\Delta(u)=\Delta$ and $F\in\fF(u)$. For any $f\in F$, we have that:
\begin{enumerate}
 \item If $f\in F$ is such that $f[u]_l\neq 0$ and $l\not\in LP(f)+\Delta$ then $(l_1+1)(l_2+1)\leq 2t+1$.
 \item  If $l_k>2t-1$, for $k\in \{1,2\}$ then $f[u]_l=0$.
 \item If $l_1,l_2\neq 0$ and $(l_1+1)(l_2+1)>2t+1$ then $f[u]_l= 0$ for any $f\in F$.
\end{enumerate}
Hence, if $l\not\in\B_{2t+1}$ then $f[u]_l=0$.
\end{theorem}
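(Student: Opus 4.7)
The proof plan is essentially a consolidation: items (1)--(3) are direct restatements of the three previously proved lemmas of this section, and the concluding clause is a short case analysis on the explicit form of $\B_{2t+1}$. Concretely, I would observe that (1) is just Lemma \ref{cota para l_1+1 por l_2+1} applied to the chosen $f\in F$ (with $LP(f)\in LP(F)$), (2) is the unnamed lemma immediately following it, and (3) is Lemma \ref{puntos fuera de la escalera}. So no new argument is required to establish items (1)--(3) beyond quoting the preceding results.

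To deduce the final sentence, the plan is to unpack the definition \eqref{conjunto B delta} with $\delta=2t+1$, namely
$$\B_{2t+1}=\{(i,j)\in I : (i+1)(j+1)\leq 2t+1\}\setminus\{(2t,0),(0,2t)\},$$
and then partition the complement of $\B_{2t+1}$ in $I$ into three pieces: (a) pairs with $(l_1+1)(l_2+1)>2t+1$ and $l_1,l_2\neq 0$; (b) pairs with $(l_1+1)(l_2+1)>2t+1$ and either $l_1=0$ or $l_2=0$; and (c) the two excised points $(2t,0)$ and $(0,2t)$. In case (a), item (3) gives $f[u]_l=0$ at once. In case (b), say $l_2=0$, the strict inequality forces $l_1>2t>2t-1$, so item (2) applies; the symmetric subcase is analogous. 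In case (c), one of $l_1,l_2$ equals $2t>2t-1$, so item (2) again yields $f[u]_l=0$.

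There is no genuine obstacle here: the only thing to be careful about is that the case analysis exhausts the complement of $\B_{2t+1}$ in $I$, and that one remembers to treat the two boundary points $(2t,0)$ and $(0,2t)$ separately (they are precisely the points that witness the minimality of $\B$ discussed in the introduction). The substantive content of the theorem is entirely contained in the three preceding lemmas; the present statement merely packages their conclusions into the single uniform vanishing criterion that will be invoked in the next subsection to show $F_l=F_{l+1}$ for $l\notin \B_{2t+1}$.
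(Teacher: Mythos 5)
Your proposal is correct and matches the paper's (implicit) argument: the paper introduces this theorem with ``Let us summarize the results above'' and gives no separate proof, items (1)--(3) being exactly Lemma~\ref{cota para l_1+1 por l_2+1}, the unnamed lemma after it, and Lemma~\ref{puntos fuera de la escalera}. Your three-way partition of the complement of $\B_{2t+1}$ (points with $(l_1+1)(l_2+1)>2t+1$ off the axes, such points on an axis, and the two excised points $(2t,0)$, $(0,2t)$) is exhaustive and correctly routes each case to the appropriate item.
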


 Let $C$ be a bivariate code with (a lower bound of its) error-correction capability $t\leq\lfloor\frac{d(C)-1}{2}\rfloor$ and let $g=c+e$ the received polynomial. Let $U$ be the syndrome table afforded by $e$ and $\tau\in I$, and assume that $\omega(e)\leq t\leq 4$. Suppose that $\overline{\tau+\B_{2t+1}} \subset \D_{\boldsymbol{\alpha}}(C)$ (see Equality~\eqref{conjunto B delta}). Then, for all $l\in \B_{2t+1}$, the values $u_l= e\left(\boldsymbol{\alpha}^{\tau+l}\right)=g\left(\boldsymbol{\alpha}^{\tau+l}\right)$ are known.

\begin{theorem}\label{condicion suficiente}
Let $C$ be a bivariate code with (a lower bound to its) error-correction capability $t\leq 4\leq \lfloor\frac{d(C)-1}{2}\rfloor$. Suppose $\overline{\tau+\B_{2t+1}} \subset \D_{\boldsymbol{\alpha}}(C)$, for some $\tau\in J$ and $u_{(0,j)}\neq 0$, for some $j<t$ (respectively if $u_{(i,j)}\neq 0$ with $i+j=1$).   Then any transmision of codewords of $C$ with no more than $t$-errors may be decoded by applying the BMSa with the lexicographic order (respectively the graded order) on $\B_{2t+1}$.
\end{theorem}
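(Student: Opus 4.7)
The plan is to read Theorem~\ref{condicion suficiente} as a termination criterion for the Sakata algorithm: one processes the indices of $\B_{2t+1}$ in the chosen monomial order, and once the largest such index has been treated, the current minimal set $F$ is already a Groebner basis for $\bL(U)$. Theorem~\ref{igualdad ideales} combined with Remark~\ref{los delta conjuntos y los minimales}(8) then recovers $\supp(e)=\D_{\boldsymbol{\alpha}}(L(e))=\D_{\boldsymbol{\alpha}}(F)$, and the error coefficients (if $q>2$) follow from the resulting linear system evaluated at the positions $\boldsymbol{\alpha}^s$, $s\in \supp(e)$.

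First I would justify that the algorithm can actually be run using only indices in $\B_{2t+1}$. Since $\overline{\tau+\B_{2t+1}}\subset \D_{\boldsymbol{\alpha}}(C)$, for every $l\in \B_{2t+1}$ the syndrome $u_l=e(\boldsymbol{\alpha}^{\tau+l})$ coincides with $g(\boldsymbol{\alpha}^{\tau+l})$ and is therefore computable from the received word. The extra low-index hypothesis (either $u_{(0,j)}\neq 0$ for some $j<t$ in the lexicographic case, or $u_{(i,j)}\neq 0$ with $i+j=1$ in the graded case) matches the order of traversal: it is what forces the BMSa to leave the trivial state $F=\{1\}$ before the traversal of $\B_{2t+1}$ is completed, so that a genuine Groebner basis can be produced rather than a trivial output.

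The central step is to show that, after processing the maximum $l_{\max}$ of $\B_{2t+1}$ with respect to $\leq_T$, the resulting $F\in\fF(u^{l_{\max}+1})$ is contained in $\bL(U)$. For each $f\in F$ and each $n\in\Sigma_0$ I would distinguish three regions. If $n\in\B_{2t+1}$ with $n\leq_T l_{\max}$, then $f[u]_n=0$ was enforced during the algorithm. If $n\notin\B_{2t+1}$ and $(n_1+1)(n_2+1)>2t+1$, Theorem~\ref{puntos fuera de la escalera global}(3) gives $f[U]_n=0$. Finally, the two exceptional points $(2t,0)$ and $(0,2t)$, which lie inside $\{(i+1)(j+1)\leq 2t+1\}$ but are excluded from $\B_{2t+1}$, each have a coordinate strictly greater than $2t-1$, so Theorem~\ref{puntos fuera de la escalera global}(2) again yields $f[U]_n=0$. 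Consequently $F\subset \bL(U)$, and by Remark~\ref{los delta conjuntos y los minimales}(5) the set $F$ is a Groebner basis for $\bL(U)$, as required.

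The hard part is the third case above: the points $(2t,0)$ and $(0,2t)$ sit inside the hyperbolic region but outside $\B_{2t+1}$, and one must use precisely the $l_k>2t-1$ clause of Theorem~\ref{puntos fuera de la escalera global} to dispose of them; this is exactly the reason for removing them when defining $\B_{2t+1}$ in \eqref{conjunto B delta}. A secondary subtlety is confirming that the low-index nondegeneracy assumption really matches the traversal imposed by the chosen monomial order, and that it is enough to prevent the BMSa from halting on a set $F$ that is minimal for $u^{l_{\max}+1}$ but not yet a Groebner basis of $\bL(U)$.
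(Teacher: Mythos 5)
Your proposal is correct and follows essentially the same route as the paper: both arguments reduce to Theorem~\ref{puntos fuera de la escalera global} to show that every index outside $\B_{2t+1}$ (the region $(n_1+1)(n_2+1)>2t+1$ via part (3), and the axis points including $(2t,0)$, $(0,2t)$ via part (2)) contributes nothing, so that the final $F$ lies in $\bL(U)$ and is a Groebner basis by Remark~\ref{los delta conjuntos y los minimales}(5), after which Theorem~\ref{igualdad ideales} finishes the decoding. The only cosmetic difference is that you partition $\Sigma_0$ into regions while the paper walks through the traversal order for each of the two monomial orderings separately.
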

\begin{proof}
 We begin by considering the lexicographic order.
 
 We recall that at initializing the BMSa we take $F=\{1\}$, so that $1[u]_{(0,j)}=u_{(0,j)}$ and the first two defining points are $(1,0)$ and $(0,j+1)$, which indicate us the necessity $u_{(0,j)}\neq 0$ for some $j<t$.
 
 Now, to do all steps for the pair of the form $(0,*)$ we have to compute at most $l=(0,j)$ for $j=0,\dots2t-1$. Now suppose we have compute $\Delta(u^l)$ for all $l=(l_1,l_2)$ with $l_1,l_2\neq 0$ and $(l_1+1)(l_2+1)\leq 2t+1$, which is equivalent for $t\leq 4$  to the values $l_2=0,\dots,t-l_1$. Then any step considered after that, say again $l$, must verify $f[u]_l=0$, by Theorem~\ref{puntos fuera de la escalera global}(3).
 
 Clearly, the last point for which our $\Delta$ may be increased is $(2t-1,0)$. After that, Theorem~\ref{puntos fuera de la escalera global} guarantees us that $\Delta$ cannot increase their size. However, any step of the form $l=(l_1,0)$ with $l_1\leq 2t-1$ may satisfy $l\in LP(F)+\Delta$ and so $F$ may be changed. So we have to consider them.
 
 For any step of the form $l\geq_T(2t,0)$ it happens that $l\not\in LP(F)+\Delta$ and clearly $f[u]_l=0$, for all $f\in F\in \fF(u^l)$ because $|\Delta(u^{l+1})|\leq t$.
 
 Now we deal with graded order. Suppose we compute $\Delta(u^l)$ for all $\{l=(l_1,l_2)\tq l_1+l_2\leq t\}$, and $F$ is the minimal set of polynomials obtained in the last iteration, with $\Delta(F)=\Delta$.  Consider a point $l=(l_1,l_2)$ such that $l_1+l_2\geq t+1$. Then one may check that $(l_1+1)(l_2+1)>2t+1$, for $t\leq 4$; so, if  one has that $l_1,l_2\neq 0$ then Theorem~\ref{puntos fuera de la escalera global} says that  $f[u]_l=0$ for all $f\in F$.
Finally, it may happen that $f[u]_l\neq 0$ for $l=(a,0),(0,a)$ with $a \in \{t+1,\dots,2t-1\}$ (the cases $l=(j,0)$, with $j\geq 2t$ has been already seen). We will continue forming minimal sets of polynomials until consider all of them.
 
 Therefore in any of the monomial orders considered, the polynomials of $F$ are valid in $I$; so that $F\subset \bL(U)$ and then $\langle F\rangle =\bL(U)$. By Remark~\ref{los delta conjuntos y los minimales}(5) and Theorem~\ref{igualdad ideales} we are done.
\end{proof}

\begin{example}\label{ejemplo de (0,t+j)}
 Consider the code $C$, in $\F_2(5,15)$ with primitive root $a$, and $\D_{(\alpha,\beta)}(C)=Q(0,13)\cup Q(1,13)\cup Q(2,13)\cup Q(3,13)\cup Q(4,13)\cup Q(0,0)\cup Q(0,1)$. One may check that the strong apparent distance  $sd^*(C)=6$, so that $t=2$ is a lower bound for the error correction capability of $C$. For the error polynomial $e=X_1^2X_2^2+X_2$ and $\tau=(0,13)$ we have the first value $u_{(0,0)}=e(\alpha^0,\beta^{13})=a^{4}$ and the last one $u_{(4,0)}=e(\alpha^4,\beta^{13})=a^2$. So that we arrange
  \[\left(u_n\tq n\in\B_5\right)=\begin{pmatrix}
        a^{4} & a^{2} & 0 & a^{5}\\
        a^{14} & a^9 \\
        a^{3} \\
        a^2
      \end{pmatrix}.\]

Next table summarizes all computation with respect to the lexicographic order.
\begin{footnotesize}
 \[\begin{array}{|l|l|l|l|}\hline
 l&F\subset \bL(u^{l+1})&G&\Delta(u^{l+1})\\ \hline
  \text{Initializing}&\{1\}&\emptyset&\emptyset\\ \hline
(0,0)\rightarrow&\{X_1,X_2\}&\{1\}&\{(0,0)\}\\ \hline
(0,1)\rightarrow&\{X_1,X_2+a^{13}\}&\{1\}&\{(0,0)\}\\ \hline
 (0,2)\rightarrow&\{X_1,X_2^2+a^{13}X_2+a^{11}\}&\{X_2+a^{13}\}&\{(0,0),(0,1)\}\\ \hline
  (0,3)\rightarrow&\{X_1,X_2^2+a^{5}X_2+a^3\}&\{X_2+a^{13}\}&\{(0,0),(0,1)\}\\ \hline
(1,0)\rightarrow&\begin{array}{l}
                  \{X_1+a^6X_2+a^2,\\X_2^2+a^{5}X_2+a^3\}
                 \end{array}
&\{X_2+a^{13}\}&\{(0,0),(0,1)\}\\ \hline
(1,1)\rightarrow &
\begin{array}{l}
\{X_1+a^8X_2+a^7,\\
X_2^2+a^{5}X_2+a^3\}
\end{array}
&\{X_2+a^{13}\}&\{(0,0),(0,1)\}\\ \hline
  (2,0),(3,0)\rightarrow&
\text{Same} & \text{Same} & \text{Same}
\\\hline
\end{array}\]
\end{footnotesize}
The reader may check that $\D_{\boldsymbol{\alpha}}(\bL(U))=\D_{\boldsymbol{\alpha}}(\langle F\rangle)=\{(2,2),\;(0,1)\}$.
\end{example}

%
%
 \bibliographystyle{splncs04}
%

\end{document}